\numberwithin{equation}{section} 
\theoremstyle{definition}
\newtheorem{theorem}{Theorem}
\newtheorem{pack}{Packing problem}
\newcommand{\mr}[1]{\mathrm{#1}}
\newcommand{\mc}[1]{\mathcal{#1}}
\newcommand{\ms}[1]{\mathsf{#1}}
\newcommand{\mb}[1]{\mathbb{#1}}
\newcommand{\tr}[1]{\mr{tr}[#1]}
\newcommand{\la}{\langle}
\newcommand{\ra}{\rangle}
\newcommand{\hil}{\mathcal{H}}
\newcommand{\id}{\mathbbm{1}} 
\newcommand{\ket}[1]{|#1\ra}
\newcommand{\ketbra}[2]{|#1\ra\la #2|}
\newcommand{\half}{\frac{1}{2}}
\newcommand{\f}{\varphi}
\title{Effective methods for constructing \\ extreme quantum observables}
\author{Erkka Haapasalo}
\email{erkkath@gmail.com}
\author{Juha-Pekka Pellonp\"a\"a}
\email{juhpello@utu.fi}
\address{Turku Centre for Quantum Physics, Department of Physics and Astronomy, University of Turku, FI-20014 Turku, Finland}
\begin{document}

\maketitle

\begin{abstract}

We study extreme points of the set of finite-outcome positive-operator-valued measures (POVMs) on finite-dimensional Hilbert spaces and particularly the possible ranks of the effects of an extreme POVM. We give results discussing ways of deducing new rank combinations of extreme POVMs from rank combinations of known extreme POVMs and, using these results, show ways to characterize rank combinations of extreme POVMs in low dimensions. We show that, when a rank combination together with a given dimension of the Hilbert space solve a particular packing problem, there is an extreme POVM on the Hilbert space with the given ranks. This geometric method is particularly effective for constructing extreme POVMs with desired rank combinations.

\noindent
\newline
\noindent
\end{abstract}
\maketitle


\section{Introduction}\label{sec:introduction}

The measurement outcome statistics of a quantum measurement are given by quantum observables, which are usually identified with {\it positive-operator-valued measures} (POVMs) \cite{kirja,holevokirja}. In this treatise we concentrate on finite-outcome POVMs on finite-dimensional Hilbert spaces. As an observable $\ms M=(\ms M_j)_{j=1}^N$, $\ms M_j\in\mc L(\hil)$, $\ms M_j\geq0$, $j=1,\ldots,\,N$, $\sum_{j=1}^N\ms M_j=\id_\hil$, ($\mc L(\hil)$ standing here for the algebra of (bounded) linear operators on the Hilbert space $\hil$ the unit element of which is denoted by $\id_\hil$) is measured, the outcome $j$ is detected with the probability $p^{\ms M}_\rho(j)=\tr{\ms M_j\rho}$ supposing that the pre-measurement state of the system was $\rho$.

Quantum measurements can be mixed: We may imagine a classically randomized measurement procedure where a measurement corresponding to a POVM $\ms M^{(1)}$ is triggered with relative frequency $t\in[0,1]$ and another measurement corresponding to a POVM $\ms M^{(2)}$ is triggered for the rest of the time. The effective observable measured is the statistical mixture $\ms M=t\ms M^{(1)}+(1-t)\ms M^{(2)}$, i.e.,\ $\ms M_j=t\ms M^{(1)}_j+(1-t)\ms M^{(2)}_j$ for all outcomes $j$. Indeed, the set of POVMs (with a fixed number of outcomes and operating in the same Hilbert space) is convex.

An extreme point within the set of POVMs (an extreme POVM) is free from this classical randomness and thus contains no redundancy caused by mixing different measurement procedures. According to the Kre\u{\i}n-Milman theorem, the set of POVMs is essentially spanned by the convex mixtures of extreme POVMs. This is further reason why, in many applications, it suffices to concentrate on extreme POVMs.

The question arises on how much resources we need to implement extreme POVMs and one such resource aspect is the possible ranks ${\rm rank}\,\ms M_j$ of an extreme POVM $\ms M$. Naturally, the higher the ranks are, the more resources have to be used in implementing the POVM. We are particularly interested in the possible rank combinations and what conditions these combinations have to satisfy. We find a particularly nice sufficient condition on rank combinations to guarantee that an extreme POVM with that particular combination of ranks exists. The condition is not, however, necessary.

The paper is organized as follows: In Section \ref{sec:basics} we review some known necessary and sufficient conditions for extremality of POVMs from \cite{arveson,dlp,parthasarathy99,Pellonpaa11} and introduce some known conditions the ranks of extreme POVMs have to satisfy. We go on in Section \ref{sec:useful} to give methods for finding new extreme POVMs when we have access to a known extreme POVM and consequently find methods of deducing rank combinations assured to be associated with an extreme POVM using rank combinations that are known to correspond to an extreme POVM. Using these techniques, in Section \ref{sec:lowdim} we are able to characterize the rank combinations corresponding to extreme POVMs in low dimensions. Finally in Section \ref{sec:packing} we introduce certain packing problems and show that solutions to packing problems are always rank combinations of an extreme POVM. There remain POVMs associated with rank combinations which do not solve these packing problems. The packing problem, however, provides a nice visual `algorithm' for finding extreme POVMs with a desired combination of ranks.

\section{Extreme POVMs}\label{sec:basics}

In our investigation we concentrate on finite-outcome POVMs on a finite-dimensional Hilbert space $\hil$, $\dim\hil=d$. We fix the orthonormal basis $\{\ket n\}_{n=1}^d$ for $\hil$ for the duration of this paper. An $N$-outcome POVM $\ms M$ on $\hil$ is identified with an $N$-tuple $(\ms M_j)_{j=1}^N$ of positive operators on $\hil$ such that $\ms M_1+\cdots+\ms M_N=\id_\hil$. We denote by $r_j(\ms M)$ the rank of the $j$:th operator $\ms M_j$.

A POVM $\ms M=(\ms M_j)_{j=1}^N$ has a minimal Na\u{\i}mark dilation $(\mc M,\ms P,J)$ consisting of a Hilbert space $\mc M$ of dimension $r_1(\ms M)+\cdots+r_N(\ms M)$, a projection-valued measure (PVM) $\ms P=(\ms P_j)_{j=1}^N$ on $\mc M$ and an isometry $J:\hil\to\mc M$ such that $\ms M_j=J^*\ms P_jJ$ for all $j=1,\ldots,\,N$ and the vectors $\ms P_jJ\f$, $j=1,\ldots,\,N$, $\f\in\hil$, span $\mc M$. 
One particular choice for a minimal dilation can be constructed by giving the spectral decompositions
\begin{equation}\label{eq:spekM}
\ms M_j=\sum_{k=1}^{r_j(\ms M)}\ketbra{f_{jk}}{f_{jk}},\qquad j=1,\ldots,\,N,
\end{equation}
where the set of vectors $\{f_{jk}\}_{k=1}^{r_j(\ms M)}$ is orthogonal for all $j$. Suppose that $\mc M$ is a Hilbert space of dimension $r_1(\ms M)+\cdots+r_N(\ms M)$ with an orthonormal basis $\{e_{jk}\,|\,k=1,\ldots,\,r_j(\ms M),\ j=1,\ldots,\,N\}$ and set up a linear operator $J=\sum_{j=1}^N\sum_{k=1}^{r_j(\ms M)}\ketbra{e_{jk}}{f_{jk}}$. Also define the PVM $\ms P=(\ms P_j)_{j=1}^N$ on $\mc M$ through $\ms P_j=\sum_{k=1}^{r_j(\ms M)}\ketbra{e_{jk}}{e_{jk}}$, $j=1,\ldots,\,N$. It easily follows that the triple $(\mc M,\ms P,J)$ is a minimal Na\u{\i}mark dilation for $\ms M$.

We say that a POVM $\ms M=(\ms M_j)_{j=1}^N$ is {\it extreme} if the condition 
$$
\ms M_j= \half \ms A_j+\half \ms B_j \quad \forall j
$$
yields $\ms A=\ms B=\ms M$ for POVMs $\ms A=(\ms A_j)_{j=1}^N$ and $\ms B=(\ms B_j)_{j=1}^N$. Extremality characterizations of POVMs (amongst others) ultimately go back to \cite{arveson} but, in our case, the most applicable ones are given in \cite{dlp, parthasarathy99}. 
First, the extremality of a POVM $\ms M$ can be characterized using its minimal Na\u{\i}mark dilation $(\mc M,\ms P,J)$: 
\begin{itemize}
\item[(A)] $\ms M$ is extreme if and only if for an operator $D\in\mc L(\mc M)$ the conditions $\ms P_jD=D\ms P_j$ for all $j=1,\ldots,\,N$ and $J^*DJ=0$ imply $D=0$. 
\end{itemize}
Equivalently, 
\begin{itemize}
\item[(B)] $\ms M$ is extreme if and only if the map $\mc D\ni D\mapsto J^*DJ\in\mc L(\hil)$, where $\mc D$ is the algebra of $\big(r_1(\ms M),\ldots,r_N(\ms M)\big)$-block-diagonal operators with respect to any eigenbasis of $\ms P$, is injective. 
\end{itemize}
Yet another equivalent characterization is that:
\begin{itemize}
\item[(C)]
$\ms M$ is extreme if and only if the set 
$
\{\ketbra{f_{jk}}{f_{jl}}\,|\,k,\,l=1,\ldots,\,r_j(\ms M),\ j=1,\ldots,\,N\}
$
is linearly independent, where the vectors $f_{jk}$ form the spectral decomposition in \eqref{eq:spekM}.
\end{itemize}

All PVMs are extreme, as the characterization (C) immediately show, and PVMs are the only commutative POVMs that are extreme \cite{HePe11}. There are, however, plenty of extreme POVMs with no projections in their range. For instance, a minimal informationally complete POVM consisting of rank-1 operators is such. We note that minimal Na\u{\i}mark dilations can be used to characterize extreme POVMs also in the case of infinite dimensional Hilbert space and uncountably many outcomes \cite{Pellonpaa11}. A physically relevant example of such an extreme POVM is the canonical phase observable \cite{HePe09}. 

As observed in \cite{dlp, parthasarathy99}, the above extremality characterizations imply that an extreme POVM $\ms M$ satisfies:
\begin{itemize}
\item[(i)] $r_1(\ms M)+\cdots+r_N(\ms M)\geq d$,
\item[(ii)] $r_1(\ms M)^2+\cdots+r_N(\ms M)^2\leq d^2$,
\item[(iii)] $r_j(\ms M)+r_k(\ms M)\leq d$ for any $j,\,k=1,\ldots,\,N$, $j\neq k$.
\end{itemize}
The condition (i) above is trivial and holds for any POVM since $\ms M_1+\cdots+\ms M_N=\id_\hil$.
The condition (ii) follows immediately from (C). The condition (iii) follows from the fact that the intersection of the supports of two different operators $\ms{M}_j$ and $\ms{M}_k$ must contain only the zero vector.

The previous conditions (i)--(iii) can be seen as necessary conditions for a vector $(m_1,\ldots,\,m_N)$ to be a rank vector of an extreme POVM.
In what follows, we will investigate the opposite question on sufficient conditions, i.e., we study conditions on vectors $(m_1,\ldots,\,m_N)$ that guarantee the existence of an extreme POVM $\ms M=(\ms M_j)_{j=1}^N$ on $\hil$ with $r_j(\ms M)=m_j$ for all $j=1,\ldots,\,N$. For example, it has been shown in \cite{HaHePe12} that for every $N$ satisfying $d \leq N \leq d^2$, there exists an extremal rank-1 POVM with $N$ outcomes.

\section{Useful results}\label{sec:useful}

In this section, we list and prove some important results we need later on. 
In all the following theorems \ref{theor:rank-1lisaa}, \ref{theor:deleterank}, \ref{theor:refinerank}, \ref{theor:multiplyranks}, and \ref{theor:increaserank}, $\ms M=(\ms M_j)_{j=1}^N$ is an extreme POVM on a $d$-dimensional Hilbert space $\hil$ and the vectors $f_{jk}$ give the spectral decomposition of \eqref{eq:spekM} for the components of $\ms M$. The following theorem is a generalization of a result proven in \cite{HaHePe12}.

\begin{theorem}[Adding of rank-1 operators]\label{theor:rank-1lisaa}
Suppose that $r_1(\ms M)^2+\cdots+r_N(\ms M)^2<d^2$. There is an extreme $(N+1)$-outcome POVM $\ms M'$ with $r_j(\ms M')=r_j(\ms M)$ for $j=1,\ldots,\,N$ and $r_{N+1}(\ms M')=1$. 
In particular, for any $N=d,\ldots,\,d^2$ there is an extreme rank-1 POVM, i.e.,\ an extreme POVM $\ms M$ with $r_1(\ms M)=\cdots=r_N(\ms M)=1$.
\end{theorem}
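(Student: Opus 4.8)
The plan is to lean on the extremality characterization (C). I would build the enlarged POVM $\ms M'$ by conjugating the components of $\ms M$ by a suitable invertible positive operator and then appending a single rank-1 operator. Concretely, set $V=\mathrm{span}\{\ketbra{f_{jk}}{f_{jl}}\mid k,l=1,\ldots,r_j(\ms M),\ j=1,\ldots,N\}\subseteq\mc L(\hil)$; by (C) these operators are linearly independent, so $\dim V=r_1(\ms M)^2+\cdots+r_N(\ms M)^2<d^2$ by hypothesis, and $V$ is a proper subspace of $\mc L(\hil)$. Since the rank-1 projections $\ketbra{g}{g}$ span all of $\mc L(\hil)$, they cannot all lie in a proper subspace, so there is a unit vector $g$ with $\ketbra{g}{g}\notin V$; I fix such a $g$ and any $\eps\in(0,1)$.

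Next I would set $S=(\id-\eps\ketbra{g}{g})^{1/2}$, which is positive and invertible because $\id-\eps\ketbra{g}{g}$ has eigenvalues $1-\eps>0$ and $1$. Define $\ms M_j'=S\ms M_jS$ for $j=1,\ldots,N$ and $\ms M_{N+1}'=\eps\ketbra{g}{g}$. Then $\sum_{j=1}^N\ms M_j'=S\id S=\id-\eps\ketbra{g}{g}$, so $\ms M'=(\ms M_j')_{j=1}^{N+1}$ is indeed a POVM. As $S$ is invertible, conjugation preserves ranks, giving $r_j(\ms M')=r_j(\ms M)$ for $j\le N$ and $r_{N+1}(\ms M')=1$, exactly the required rank vector.

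It remains to verify extremality through (C), and this is the crux of the argument. Each $\ms M_j'$ has the decomposition $\ms M_j'=\sum_k\ketbra{Sf_{jk}}{Sf_{jk}}$ into the linearly independent (though no longer orthogonal) vectors $Sf_{jk}$, so I would first record that (C) may be tested on any decomposition of each $\ms M_j'$ into linearly independent vectors: a change of basis within each support subspace is an invertible transformation preserving the linear span of the associated rank-1 operators. Thus $\ms M'$ is extreme iff $\{S\ketbra{f_{jk}}{f_{jl}}S\}_{j\le N}\cup\{\ketbra{g}{g}\}$ is linearly independent. Since $A\mapsto SAS$ is a linear bijection of $\mc L(\hil)$, the first family is independent and spans $SVS$, so the whole set is independent precisely when $\ketbra{g}{g}\notin SVS$, i.e.\ $S^{-1}\ketbra{g}{g}S^{-1}\notin V$. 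The key simplification is that $g$ is an eigenvector of $S$, so $S^{-1}g=(1-\eps)^{-1/2}g$ and hence $S^{-1}\ketbra{g}{g}S^{-1}=(1-\eps)^{-1}\ketbra{g}{g}$; this collapses the condition to $\ketbra{g}{g}\notin V$, which holds by the choice of $g$. The main subtlety to get right is exactly this reduction, together with the justification that (C) is insensitive to the chosen decomposition.

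Finally, for the ``in particular'' statement I would argue by induction on $N$. The base case $N=d$ is furnished by the rank-1 PVM $(\ketbra{n}{n})_{n=1}^d$ in the fixed basis, which is extreme since every PVM is, and satisfies $r_1^2+\cdots+r_d^2=d<d^2$. For the inductive step, a rank-1 extreme POVM with $N$ outcomes has $r_1^2+\cdots+r_N^2=N$, so whenever $N<d^2$ the hypothesis of the first part holds and I may append one more rank-1 operator, producing a rank-1 extreme POVM with $N+1$ outcomes. Iterating from $N=d$ then covers all $N=d,\ldots,d^2$.
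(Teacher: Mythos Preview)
Your proof is correct and follows essentially the same construction as the paper: pick a rank-1 projection outside the span $\mc R$ of the operators $\ketbra{f_{jk}}{f_{jl}}$, conjugate all effects by an invertible positive operator built from that projection, append the projection as a new outcome, and then iterate starting from a rank-1 PVM. The only cosmetic difference is that you verify extremality directly via characterization (C)---after the (correct) observation that (C) may be tested with any linearly independent decomposition of each effect---whereas the paper routes the same computation through the minimal Na\u{\i}mark dilation and characterization (A); your argument is, if anything, slightly more streamlined.
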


\begin{proof}
We denote by $\mc R$ the operator system (selfadjoint linear subspace of $\mc L(\hil)$) spanned by $\ketbra{f_{jk}}{f_{jl}}$, $k,\,l=1,\ldots,\,r_j(\ms M)$, $j=1,\ldots,\,N$. 
The assumption on ranks implies that there is a nonzero selfadjoint operator $A\notin\mc R$. 
Since $A$ is selfadjoint, it has a decomposition $A=\sum_{s=1}^r\alpha_s\ketbra{\eta_s}{\eta_s}$ with $\alpha_s\in\mb C$. Hence, at least one, let say $\ketbra{\eta_1}{\eta_1}$ of the rank-1 operators $\ketbra{\eta_s}{\eta_s}$, is not in $\mc R$. 
We then denote $R=\big(\id_\hil+\ketbra{\eta_1}{\eta_1}\big)^{-1/2}$, and set
\begin{eqnarray*}
\ms M'_j&=&R\ms M_jR,\qquad j=1,\ldots,\,N,\\
\ms M'_{N+1}&=&R\ketbra{\eta_s}{\eta_s}R.
\end{eqnarray*}
Clearly, $\ms M'=(\ms M'_j)_{j=1}^{N+1}$ is a POVM.

Let $(\mc M,\ms P,J)$ be a minimal dilation for $\ms M$ where $\mc M$ has the orthonormal basis $\{e_{jk}\,|\,k=1,\ldots,\,r_j(\ms M),\ j=1,\ldots,\,N\}$, $\ms P_j=\sum_{k=1}^{r_j(\ms M)}\ketbra{e_{jk}}{e_{jk}}$ for all $j$, and $J=\sum_{j=1}^N\sum_{k=1}^{r_j(\ms M)}\ketbra{e_{jk}}{f_{jk}}$. Denote $\mc M'=\mc M\oplus\mb C$ and pick any unit vector $e_{N+1}$ from the (1-dimensional) orthogonal complement of $\mc M$ within $\mc M'$. Define $\ms P'_j=\ms P_j$, $j=1,\ldots,\,N$, and $\ms P'_{N+1}=\ketbra{e_{N+1}}{e_{N+1}}$, and $J'=JR+\ketbra{e_{N+1}}{R\eta_1}$. Using the invertibility of $R$, it is simple to check that $(\mc M',\ms P',J')$, where $\ms P'=(\ms P'_j)_{j=1}^{N+1}$, is a minimal dilation for $\ms M'$. For each $j=1,\ldots,\,N$, denote the subspace of $\mc M$ spanned by $\{e_{jk}\}_{k=1}^{r_j(\ms M)}$ by $\mc M_j$ and $\mb Ce_{N+1}=:\mc M_{N+1}$ so that $\mc M=\bigoplus_{j=1}^{N+1}\mc M_j$. The commutant of the PVM $\ms P'$ consists of decomposable, or block-diagonal, operators $D=\bigoplus_{j=1}^{N+1}D_j$, $D_j\in\mc L(\mc M_j)$, $j=1,\ldots,\,N+1$. Suppose that $D=\bigoplus_{j=1}^{N+1}D_j$ is decomposable and define $D_0=\sum_{j=1}^N\ms P_jD\ms P_j$ so that $D=D_0\oplus d\ketbra{e_{N+1}}{e_{N+1}}$ with some $d\in\mb C$. It follows that
$$
(J')^*DJ'=R\big(J^*D_0J+d\ketbra{\eta_1}{\eta_1}\big)R.
$$
Since $R$ is invertible, we find that $(J')^*DJ'=0$ if and only if $J^*D_0J+d\ketbra{\eta_1}{\eta_1}=0$. This is equivalent with $J^*D_0J=0$ and $d=0$ because $J^*D_0J\in\mc R$ and $\ketbra{\eta_1}{\eta_1}\notin\mc R$. Because $\ms M$ is extreme, $J^*D_0J=0$ yields $D_0=0$. Thus $D=0$ implying that $\ms M'$ is extreme.

To prove the last claim, pick any orthonormal basis $\{\f_n\}_{n=1}^d$ of $\hil$ and define the observable $\ms M^d=(\ms M^d_j)_{j=1}^d$, $\ms M^d_j=\ketbra{\f_j}{\f_j}$, $j=1,\ldots,\,d$. 
The POVM $\ms M^d$ is a rank-1 PVM and thus extreme. We may construct rank-1 POVMs $\ms M^N=(\ms M^N_j)_{j=1}^N$, $N=d,\ldots,\,d^2$, where for each $N=d,\ldots,\,d^2-1$ the POVM $\ms M^{N+1}$ is obtained by the method of adding a rank-1 outcome to $\ms M^N$ using the technique introduced above. Since $\ms M^d$ is extreme, all of the $\ms M^N$ are extreme.
\end{proof}

We are interested in the possible rank vectors $\vec m=(m_1,\ldots,\,m_N)_d$ of extreme $N$-outcome POVMs on a $d$-dimensional $\hil$. 
We always order the ranks from highest to lowest and indicate repeating values of rank with a subscript, so that, e.g.,\ $(3,2,2,2)_5=:(3,2_3)_5$. 
Moreover, we may neglect rank-1 outcomes since, according to Theorem \ref{theor:rank-1lisaa}, rank-1 outcomes may be freely added as long as the sum of squares of the ranks does not exceed $d^2$. Thus, e.g.,\ $(3,2,2,2,1,1,1,1)_5$ is replaced with $(3,2_3)_5$. 
Furthermore, we say that a POVM $\ms M$ (with $N$ outcomes) is associated with the rank vector $\vec m$ when ordering the vector $\big(r_1(\ms M),\ldots,r_N(\ms M)\big)$ in a descending order and grouping the recurrent ranks as described above and neglecting the possible rank-1 outcomes we obtain $\vec m$. In this situation, we denote $\vec m=:\vec m(\ms M)$.

\begin{theorem}[Operators can be deleted]\label{theor:deleterank}
Suppose that
$$
\vec m(\ms M)=(m^1_{s^1},\ldots,m^R_{s^R})_d
$$
where $m^1,\ldots,\,m^R$ are natural numbers greater than or equal to 2 with $1\leq r\leq R$. There is an extreme POVM $\ms M'$ with
$$
\vec m(\ms M')=(m^1_{t^1},\ldots,m^R_{t^R})_d,\qquad t^r\leq s^r,\quad r=1,\ldots,\,R
$$
i.e.,\ one can delete operators from an extreme POVM and obtain another extreme POVM, possibly adding rank-1 outcomes to ensure normalization if above $\sum_{r=1}^Rt^rm^r<d$.
\end{theorem}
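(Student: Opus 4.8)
The plan is to keep the sub-collection of effects realizing the target multiplicities, renormalize it by a congruence so that it again resolves the identity, and pad with rank-1 projections wherever the kept effects fail to span $\hil$. Concretely, I would fix a set $K$ of indices containing exactly $t^r$ of the $s^r$ outcomes of rank $m^r$ for each $r$ (together with all the rank-1 outcomes, which are irrelevant to $\vec m$), discard the rest, and set $S=\sum_{j\in K}\ms M_j$. Writing $W={\rm ran}\,S=\sum_{j\in K}{\rm ran}\,\ms M_j$ and $Z=W^\perp$, each kept effect is supported on $W$ (its range lies in $W$ and it kills $Z$), so $S$ restricts to a positive invertible operator on $W$. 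I would then define $\ms M'_j=(S|_W)^{-1/2}\ms M_j(S|_W)^{-1/2}$ for $j\in K$, viewed as operators on $\hil$ supported on $W$, and add outcomes $\ketbra{\chi_i}{\chi_i}$ for an orthonormal basis $\{\chi_i\}$ of $Z$. A one-line computation gives $\sum_{j\in K}\ms M'_j=\id_W$ and $\sum_i\ketbra{\chi_i}{\chi_i}=\id_Z$, so $\ms M'$ is a POVM on $\hil$; since $(S|_W)^{-1/2}$ is invertible on $W$ it preserves the rank of each kept effect, and the added outcomes are rank-1, so $\vec m(\ms M')=(m^1_{t^1},\ldots,m^R_{t^R})_d$ after neglecting rank-1 outcomes as per our convention.

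The extremality of $\ms M'$ I would verify through characterization (C). The crucial preliminary observation is that the linear independence in (C) is a property of the ranges ${\rm ran}\,\ms M'_j$ alone and not of any particular orthogonal spectral decomposition: within a block the operators $\ketbra{g_{jk}}{g_{jl}}$ built from any basis $\{g_{jk}\}_k$ of ${\rm ran}\,\ms M'_j$ span the same $r_j^2$-dimensional space of operators supported on that range, and independence across blocks depends only on these subspaces of $\mc L(\hil)$. Hence I may use the (non-orthogonal) vectors $(S|_W)^{-1/2}f_{jk}$ to test (C). Now $\ketbra{(S|_W)^{-1/2}f_{jk}}{(S|_W)^{-1/2}f_{jl}}=(S|_W)^{-1/2}\ketbra{f_{jk}}{f_{jl}}(S|_W)^{-1/2}$, and the map $X\mapsto(S|_W)^{-1/2}X(S|_W)^{-1/2}$ is a linear isomorphism of the operators supported on $W$; since $\{\ketbra{f_{jk}}{f_{jl}}\,:\,j\in K\}$ is a subset of the linearly independent family witnessing the extremality of $\ms M$, its image remains linearly independent. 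Finally, because $W\perp Z$, the operators coming from the kept block (supported on $W$) are automatically independent from the projections $\ketbra{\chi_i}{\chi_i}$ (supported on $Z$), and the latter are mutually independent as the $\chi_i$ are orthonormal. This yields the full linear independence required by (C), so $\ms M'$ is extreme.

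The one genuinely delicate point, and the step I would treat most carefully, is that deleting effects destroys normalization and, more seriously, the ranges of the kept effects need not span $\hil$, so the natural renormalizer $S$ can be singular. This is exactly what forces the splitting $\hil=W\oplus Z$: the congruence can only be carried out on $W$, and the deficit is absorbed by the rank-1 outcomes on $Z$, which is also what accounts for the rank-1 padding mentioned in the statement when $\sum_r t^rm^r<d$. Everything else—positivity, the resolution of the identity, preservation of ranks, and the block-diagonal bookkeeping in the extremality check—is routine once this decomposition is in place; in particular I would not need to appeal to Theorem \ref{theor:rank-1lisaa}, as the required rank-1 outcomes are produced directly by the construction.
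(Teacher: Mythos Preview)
Your argument is correct and in the same spirit as the paper's---both delete the unwanted effects and renormalize by a congruence $X\mapsto S^{-1/2}XS^{-1/2}$---but you handle the singular case differently. The paper proceeds by induction (removing one effect at a time) and, when $\id_\hil-\ms M_h$ fails to have full rank, first appends rank-1 outcomes chosen from outside the operator system generated by the remaining $\ketbra{f_{jk}}{f_{jl}}$ (exactly the mechanism of Theorem~\ref{theor:rank-1lisaa}) until the sum becomes invertible, and only then renormalizes globally; extremality is then deferred to the argument of Theorem~\ref{theor:rank-1lisaa}. You instead split $\hil=W\oplus Z$ at the outset, renormalize only on $W$, and fill $Z$ with an orthonormal family of rank-1 projections. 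This is cleaner and entirely self-contained: you never need Theorem~\ref{theor:rank-1lisaa}, and your extremality check via~(C) is explicit, resting on the useful observation that the linear-independence condition in~(C) depends only on the ranges ${\rm ran}\,\ms M_j$ and on the fact that operators supported on $W$ are automatically independent from those supported on $Z$. The paper's route has the mild advantage that it reuses machinery already in place, while yours makes the geometry of the deficiency transparent and avoids the somewhat implicit ``add rank-1 outcomes until full rank'' step.
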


\begin{proof}
We prove the claim by induction; we show that one rank can be deleted. Let us delete the outcome $\ms M_h$, $1\leq h\leq N$, resulting in a subnormalized $N-1$-outcome POVM. Since $\ms M$ is extreme, also the restricted set $\{\ketbra{f_{jk}}{f_{jl}}\,|\,k,\,l=1,\ldots,\,r_j(\ms M),\ j=1,\ldots,\,h-1,\,h+1,\ldots,\,N\}$ is linearly independent. If the rank of $\id_\hil-\ms M_h$ is $d$, We may define the POVM $\ms M'=(\ms M'_1,\ldots,\ms M'_{h-1},\ms M'_{h+1},\ldots,\,\ms M'_N)$ by setting $\ms M'_j=S\ms M_jS$ for all $j\neq h$ where $S=(\id_\hil-\ms M_h)^{-1/2}$. Otherwise we may add rank-1 outcomes to the subnormalized POVM $(\ms M_1,\ldots,\ms M_{h_1},\ms M_{j+1},\ldots,\ms M_N)$ from the compliment of the linear hull of $\{\ketbra{f_{jk}}{f_{jl}}\,|\,k,\,l=1,\ldots,\\r_j(\ms M),\ j=1,\ldots,\,h-1,\,h+1,\ldots,\,N\}$ until the sum $R$ of the components $\tilde{\ms M}_j$ of the resulting non-normalized POVM $\tilde{\ms M}$ is of full rank. Then we may set $\ms M'_j=R^{-1/2}\tilde{\ms M}_jR^{-1/2}$ for all $j$. The new $\ms M'$ can be shown to be extreme in the same way as in the proof of Theorem \ref{theor:rank-1lisaa}.
\end{proof}

\begin{theorem}[Components can be refined]\label{theor:refinerank}
For each $j=1,\ldots,\,N$ and any natural numbers $R_1,\ldots,\,R_N$ and sequences $(n_{j,r})_{r=1}^{R_j}$ of positive natural numbers such that $\sum_{r=1}^{R_j}n_{jr}=r_j(\ms M)$, there exists an extreme POVM $\ms M'=(\ms M'_{jr}\,|\,r=1,\ldots,\,R_j,\ j=1,\ldots,\,N)$ with $R_1+\cdots+R_N$ outcomes such that $\ms M_j=\sum_{r=1}^{R_j}\ms M'_{jr}$ for each $j$, i.e.,\ the components of an extreme POVM can be refined to obtain a new extreme POVM.
\end{theorem}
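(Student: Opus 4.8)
The plan is to construct $\ms M'$ explicitly by partitioning the spectral decomposition of each $\ms M_j$, and then to read off extremality directly from characterization (C), which is perfectly suited to this situation. For each fixed $j$ the vectors $f_{j1},\ldots,f_{j,r_j(\ms M)}$ appearing in \eqref{eq:spekM} are mutually orthogonal and nonzero. I would partition the index set $\{1,\ldots,r_j(\ms M)\}$ into $R_j$ consecutive blocks $B_{j1},\ldots,B_{jR_j}$ with $|B_{jr}|=n_{jr}$ (possible since $\sum_{r=1}^{R_j}n_{jr}=r_j(\ms M)$) and set
\begin{equation*}
\ms M'_{jr}=\sum_{k\in B_{jr}}\ketbra{f_{jk}}{f_{jk}},\qquad r=1,\ldots,\,R_j,\ j=1,\ldots,\,N.
\end{equation*}

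Next I would verify that this is a POVM with the claimed properties. Each $\ms M'_{jr}$ is positive and has rank exactly $n_{jr}$, since the $f_{jk}$ with $k\in B_{jr}$ are orthogonal and nonzero. Moreover $\sum_{r=1}^{R_j}\ms M'_{jr}=\ms M_j$ by construction, so summing over all $j$ and $r$ returns $\id_\hil$; thus $\ms M'$ is an $(R_1+\cdots+R_N)$-outcome POVM refining $\ms M$ with the prescribed ranks. The displayed formula is itself the spectral decomposition of $\ms M'_{jr}$, so to invoke (C) for $\ms M'$ I examine the family $\{\ketbra{f_{jk}}{f_{jl}}\,|\,k,l\in B_{jr},\ r=1,\ldots,\,R_j,\ j=1,\ldots,\,N\}$. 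Every operator in this family already occurs in $\{\ketbra{f_{jk}}{f_{jl}}\,|\,k,l=1,\ldots,\,r_j(\ms M),\ j=1,\ldots,\,N\}$, which is linearly independent because $\ms M$ is extreme. As a subset of a linearly independent set, the refined family is linearly independent, and characterization (C) then yields that $\ms M'$ is extreme.

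The crucial — and essentially the only — observation is that the index pairs $(k,l)$ arising for $\ms M'$ form a \emph{subset} of those arising for $\ms M$: refining never produces new cross terms, it only discards the off-diagonal operators $\ketbra{f_{jk}}{f_{jl}}$ for $k,l$ lying in different blocks of the same $j$. There is therefore no real obstacle here; the entire content lies in recognizing that (C) reduces extremality of a refinement to linear independence of a subfamily of an already linearly independent set. I expect the only place demanding care is confirming that the $\ms M'_{jr}$ genuinely have the intended ranks and that the partition indeed exhausts each $\{1,\ldots,r_j(\ms M)\}$. By contrast, attempting to prove extremality through the Na\u{\i}mark-dilation criteria (A) or (B) would force one to analyze the commutant of a strictly finer PVM, which is considerably more cumbersome and unnecessary here.
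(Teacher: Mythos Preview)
Your proof is correct and follows essentially the same approach as the paper: construct $\ms M'_{jr}$ by splitting each spectral decomposition \eqref{eq:spekM} into consecutive blocks, then observe that the set appearing in characterization (C) for $\ms M'$ is a subset of the corresponding set for $\ms M$ and hence linearly independent. Your write-up is in fact more careful than the paper's, which omits the verification that $\ms M'$ is a POVM with the stated ranks and the explicit remark that refinement only discards off-block cross terms.
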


\begin{proof}
Set
$$
\ms M'_{jr}=\sum_{s=1}^{n_{j,r}}\ketbra{f_{j,s+n_{j,r-1}}}{f_{j,s+n_{j,r-1}}}.
$$
Because $\ms M$ is extreme, also the restricted set
$$
\{\ketbra{f_{j,k+n_{j,r-1}}}{f_{j,l+n_{j,r-1}}}\,|\,k,\,l=1,\ldots,\,n_j,\ r=1,\ldots,\,R_j,\ j=1,\ldots,\,N\}
$$
is linearly independent implying that $\ms M'=(\ms M'_{jr}\,|\,r=1,\ldots,\,R_j,\ j=1,\ldots,\,N)$ is extreme as well.
\end{proof}

\begin{theorem}[Ranks can be multiplied]\label{theor:multiplyranks}
Let $\vec m(\ms M)=(m^1_{n^1},\ldots,m^R_{n^R})_d$. For any $A=1,\,2,\ldots$ there is an extreme POVM $\ms M'$ on an $Ad$-dimensional Hilbert space associated with the rank vector $(Am^1_{n^1},\ldots,Am^R_{n^R})_{Ad}$. This means that the ranks of an extreme POVM can be multiplied to obtain a new extreme POVM operating on a Hilbert space with similarly multiplied dimension.
\end{theorem}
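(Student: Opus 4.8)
The plan is to realize $\ms M'$ by tensoring $\ms M$ with the identity on an auxiliary $A$-dimensional factor. Concretely, I would set $\hil'=\hil\otimes\mb C^A$, which has dimension $Ad$, fix an orthonormal basis $\{u_a\}_{a=1}^A$ of $\mb C^A$, and define $\ms M'_j=\ms M_j\otimes\id_{\mb C^A}$ for $j=1,\ldots,N$. Positivity is immediate, and $\sum_j\ms M'_j=\big(\sum_j\ms M_j\big)\otimes\id_{\mb C^A}=\id_{\hil'}$, so $\ms M'$ is a POVM. Since the rank of a tensor product of operators is the product of the ranks and $\id_{\mb C^A}$ has rank $A$, we get $r_j(\ms M')=A\,r_j(\ms M)$ for every $j$, which is the desired multiplication of ranks.

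For extremality I would use characterization (C). Starting from the spectral decomposition $\ms M_j=\sum_k\ketbra{f_{jk}}{f_{jk}}$, the family $\{f_{jk}\otimes u_a\}$ is orthogonal for each fixed $j$ and yields $\ms M'_j=\sum_{k,a}\ketbra{f_{jk}\otimes u_a}{f_{jk}\otimes u_a}$, a spectral decomposition of the $j$:th component of $\ms M'$. Thus I must check that the operators $\ketbra{f_{jk}\otimes u_a}{f_{jl}\otimes u_b}=\ketbra{f_{jk}}{f_{jl}}\otimes\ketbra{u_a}{u_b}$ are linearly independent. Here the factors $\{\ketbra{f_{jk}}{f_{jl}}\}$ are linearly independent by the extremality of $\ms M$ (characterization (C)), and the matrix units $\{\ketbra{u_a}{u_b}\}_{a,b=1}^A$ form a basis of $\mc L(\mb C^A)$. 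Since the tensor products of two linearly independent families are again linearly independent --- extend each family to a basis and note that the tensor products of basis vectors form a basis of the tensor product space --- the required independence follows, and $\ms M'$ is extreme.

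The one point demanding care is the bookkeeping of rank-$1$ outcomes, which is where I expect the only real friction. The rank vector $\vec m(\ms M)$ records only the components of rank at least $2$, so the given $\ms M$ may carry additional rank-$1$ components; after tensoring, each such component becomes a rank-$A$ component of $\ms M'$, and for $A\geq 2$ these would spuriously appear in $\vec m(\ms M')$. To remove them I would invoke the refinement result, Theorem \ref{theor:refinerank}: each rank-$A$ outcome of $\ms M'$ coming from a rank-$1$ outcome of $\ms M$ can be refined into $A$ rank-$1$ outcomes while preserving extremality. After this refinement the only components of rank at least $2$ are the $n^r$ copies of $Am^r$ (note $Am^r\geq 2$ since $m^r\geq 2$), so stripping the rank-$1$ outcomes leaves exactly $\vec m(\ms M')=(Am^1_{n^1},\ldots,Am^R_{n^R})_{Ad}$, as claimed. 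Equivalently, one could first pass to a representative of the rank vector with no rank-$1$ components, but the refinement route keeps the argument self-contained.
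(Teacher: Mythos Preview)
Your proof is correct and follows the same construction as the paper: tensor $\ms M$ with $\id_{\mb C^A}$ and verify extremality via characterization (C) by noting that $\ketbra{f_{jk}}{f_{jl}}\otimes\ketbra{u_a}{u_b}$ are linearly independent. Your additional paragraph on rank-$1$ bookkeeping (refining the spurious rank-$A$ outcomes coming from rank-$1$ components of $\ms M$ via Theorem~\ref{theor:refinerank}) is a genuine point the paper's proof passes over in silence, and your resolution is clean and correct.
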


\begin{proof}
Pick an orthonormal basis $\{\ket a\}_{a=1}^A$ for $\mb C^A$ and define a POVM $\ms M'=(\ms M'_j)_{j=1}^N$,
$$
\ms M'_j=\ms M_j\otimes\id_{\mb C^A},\qquad j=1,\ldots,\,N,
$$
on $\hil\otimes\mb C^A$. Since $\ms M$ is extreme, it easily follows that the set
$$
\{\ketbra{f_{jk}}{f_{jl}}\otimes\ketbra{a}{b}\,|\,k,\,l=1,\ldots,\,r_j(\ms M),\ j=1,\ldots,\,N,\ a,\,b=1,\ldots,\,A\}
$$
is linearly independent. Thus $\ms M'$ is extreme as well.
\end{proof}

\begin{theorem}[Increasing a rank]\label{theor:increaserank}
Pick any $h=1,\ldots,\,N$. There is an extreme POVM $\ms M'=(\ms M'_j)_{j=1}^N$ on a $(d+1)$-dimensional Hilbert space with $r_j(\ms M')=r_j(\ms M)$ for all $j\neq h$ and $r_h(\ms M')=r_h(\ms M)+1$.
\end{theorem}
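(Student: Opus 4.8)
The plan is to realise the rank increase by enlarging the Hilbert space in the most economical way: embed $\hil$ as a subspace of $\hil'=\hil\oplus\mb C$ of dimension $d+1$, let $\ket{d+1}$ be a unit vector spanning the new one-dimensional summand, and append to the effect $\ms M_h$ an orthogonal rank-$1$ direction along $\ket{d+1}$. Concretely, I would regard each $\ms M_j$ as an operator on $\hil'$ annihilating $\ket{d+1}$ and set $\ms M'_j=\ms M_j$ for $j\neq h$ and $\ms M'_h=\ms M_h+\ketbra{d+1}{d+1}$. Unlike in Theorem \ref{theor:rank-1lisaa}, no conjugation by an invertible operator is needed: because the dimension grows, the extra weight can be placed on the fresh direction $\ket{d+1}$ without disturbing normalisation on $\hil$. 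I would then verify the structural requirements. Each $\ms M'_j$ is positive, and since $\ms M_j\ket{d+1}=0$ for all $j$ while $\sum_j\ms M_j=\id_\hil$, one obtains $\sum_{j=1}^N\ms M'_j=\id_\hil+\ketbra{d+1}{d+1}=\id_{\hil'}$, so $\ms M'$ is a POVM. As $\ket{d+1}\perp\hil$, adjoining $\ketbra{d+1}{d+1}$ raises the rank of $\ms M_h$ by exactly one, giving $r_h(\ms M')=r_h(\ms M)+1$ and $r_j(\ms M')=r_j(\ms M)$ for $j\neq h$.

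The spectral data of $\ms M'$ are then the old vectors $f_{jk}$ together with the single new orthogonal vector $\ket{d+1}$ playing the role of the $(r_h(\ms M)+1)$-th spectral vector of the $h$-th effect. Thus the crux is extremality, which I would establish through characterisation (C): I must show the enlarged family of dyads is linearly independent in $\mc L(\hil')$. Writing $P$ for the orthogonal projection of $\hil'$ onto $\hil$ and $Q=\ketbra{d+1}{d+1}$, I would use the direct-sum decomposition of $\mc L(\hil')$ into the four blocks $P\mc L(\hil')P$, $P\mc L(\hil')Q$, $Q\mc L(\hil')P$ and $Q\mc L(\hil')Q$.

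The old dyads $\ketbra{f_{jk}}{f_{jl}}$ lie entirely in $P\mc L(\hil')P$, the new mixed dyads $\ketbra{f_{hk}}{d+1}$ and $\ketbra{d+1}{f_{hl}}$ occupy the off-diagonal blocks, and $\ketbra{d+1}{d+1}$ sits in $Q\mc L(\hil')Q$; since these four subspaces intersect trivially, any vanishing linear combination must vanish within each block separately. In $P\mc L(\hil')P$ this is precisely the linear independence of $\{\ketbra{f_{jk}}{f_{jl}}\}$ guaranteed by extremality of $\ms M$ via (C); in the off-diagonal blocks it reduces to a relation $\sum_k c_k f_{hk}=0$, which forces $c_k=0$ because the $f_{hk}$ are mutually orthogonal; and $Q\mc L(\hil')Q$ trivially kills the last coefficient. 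Hence the whole family is independent and $\ms M'$ is extreme. The only genuinely substantive step is this block-decoupling, i.e.\ ruling out that the newly introduced dyads conspire with the old ones; the orthogonality $\ket{d+1}\perp\hil$ makes any such interference vanish, so the obstacle is mild.
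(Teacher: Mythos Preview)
Your proposal is correct and follows essentially the same route as the paper: the same direct-sum construction $\ms M'_h=\ms M_h+\ketbra{\psi}{\psi}$ on $\hil'=\hil\oplus\mb C$, and the same extremality check via characterisation (C), separating the vanishing linear combination into the $P\mc L(\hil')P$, off-diagonal, and $Q\mc L(\hil')Q$ pieces. The paper carries out this separation by sandwiching the relation with $P$ and with $\id_{\hil'}-P$ and then pairing with the vectors $f_{hk}$ and $\psi$, whereas you phrase it as the four-block decomposition of $\mc L(\hil')$; these are the same computation.
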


\begin{proof}
Denote $\hil':=\hil\oplus\mb C$ and pick any unit vector $\psi$ from the orthogonal compliment of $\hil$ within $\hil'$. Define the POVM $\ms M'=(\ms M'_j)_{j=1}^N$,
$$
\ms M'_h=\ms M_h+\ketbra{\psi}{\psi},\quad\ms M'_j=\ms M_j,\qquad j\neq h,
$$
on $\hil'$. Note that we view operators on $\hil$ as operators on $\hil'$ by extending them to operate as the zero-operator on the orthogonal complement of $\hil$. Pick complex numbers $\alpha_{jkl}$, $k,\,l=1,\ldots,\,r_j(\ms M')$, $j=1,\ldots,\,N$. We set
\begin{eqnarray*}
\sum_{j=1}^N\sum_{k,l=1}^{r_j(\ms M)}\alpha_{jkl}\ketbra{f_{jk}}{f_{jl}}&+&\sum_{k=1}^{r_h(\ms M)}\big(\alpha_{h,k,r_h(\ms M)+1}\ketbra{f_{hk}}{\psi}+\alpha_{j,r_h(\ms M)+1,k}\ketbra{\psi}{f_{hk}}\big)\\
&+&\alpha_{h,r_h(\ms M)+1,r_h(\ms M)+1}\ketbra{\psi}{\psi}=0.
\end{eqnarray*}
Multiplying this equation from both sides with $P$ and, on the other hand, with $\id_{\hil'}-P$, where $P$ is the orthogonal projection onto $\hil$, one obtains
\begin{equation}\label{eq:eka}
\sum_{j=1}^N\sum_{k,l=1}^{r_j(\ms M)}\alpha_{jkl}\ketbra{f_{jk}}{f_{jl}}=0
\end{equation}
and
\begin{equation}\label{eq:toka}
\sum_{k=1}^{r_h(\ms M)}\big(\alpha_{h,k,r_h(\ms M)+1}\ketbra{f_{hk}}{\psi}+\alpha_{j,r_h(\ms M)+1,k}\ketbra{\psi}{f_{hk}}\big)+\alpha_{h,r_h(\ms M)+1,r_h(\ms M)+1}\ketbra{\psi}{\psi}=0.
\end{equation}
Since $\ms M$ is extreme, \eqref{eq:eka} yields $\alpha_{jkl}=0$ for all $k,\,l=1,\ldots,\,r_j(\ms M)$ and $j=1,\ldots,\,N$. Operating from right and left on \eqref{eq:toka} with vectors $f_{hk}$, $k=1,\ldots,\,r_h(\ms M)$, and $\psi$ and recalling that these vectors are orthogonal, we obtain $\alpha_{h,k,r_h(\ms M)+1}=\alpha_{h,r_h(\ms M)+1,k}=0$ with $k=1,\ldots,\,r_h(\ms M)+1$. Thus $\ms M'$ is extreme.
\end{proof}

Combining the results of theorems \ref{theor:refinerank} and \ref{theor:increaserank}, one immediately obtains the following:

\begin{theorem}\label{theor:lowtohigh}
Whenever there is an extreme POVM $\ms M$ on a $d$-dimensional Hilbert space with $\vec m(\ms M)=(m^1_{s^1},\ldots,m^R_{s^R})_d$, there is an extreme POVM $\ms M'$ on a $(d+p)$-dimensional Hilbert space with $\vec m(\ms M')=(m^1_{s^1},\ldots,m^R_{s^R})_{d+p}$ for any $p=1,\,2,\ldots$.
\end{theorem}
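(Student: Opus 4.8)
The plan is to build $\ms M'$ from $\ms M$ by first inflating the rank of a single outcome so as to absorb the $p$ extra dimensions, and then refining that same outcome back down to its original rank, leaving behind only rank-1 pieces that are invisible to the rank vector. Both ingredients are already in hand: Theorem \ref{theor:increaserank} enlarges the Hilbert space by one dimension while raising one chosen rank by one, and Theorem \ref{theor:refinerank} splits a single component into smaller components within a fixed dimension. Crucially, each of these operations preserves extremality.

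Concretely, I would fix an outcome $\ms M_h$ of $\ms M$ whose rank equals $m^r$ for some $r\in\{1,\ldots,R\}$; such an outcome exists since it is one of the $s^r$ outcomes counted by the rank vector (and $R\geq 1$). I would then apply Theorem \ref{theor:increaserank} to this outcome $p$ times in succession, each application adding one dimension to the Hilbert space and raising $r_h$ by one while leaving the ranks of all other outcomes untouched. After $p$ steps this produces an extreme POVM on a $(d+p)$-dimensional Hilbert space whose outcome $h$ has rank $m^r+p$ and all of whose remaining outcomes carry exactly the ranks they had in $\ms M$.

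Next I would invoke Theorem \ref{theor:refinerank} with the trivial refinement $R_j=1$ for every $j\neq h$, and for $j=h$ the refinement $R_h=p+1$ with $n_{h,1}=m^r$ and $n_{h,2}=\cdots=n_{h,p+1}=1$. These are positive natural numbers summing to $m^r+p$, which is precisely the inflated rank of outcome $h$, so the hypothesis of the theorem is met. The resulting POVM is extreme, operates on the $(d+p)$-dimensional space, and has outcomes of exactly the ranks occurring in $\ms M$ together with $p$ additional rank-1 outcomes. Discarding the rank-1 outcomes according to our convention, its rank vector is precisely $(m^1_{s^1},\ldots,m^R_{s^R})_{d+p}$, as desired.

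There is no substantial obstacle here, since the two cited theorems carry the entire analytic burden and the argument is essentially one of bookkeeping. The only points that genuinely need checking are that an outcome of rank $m^r$ is available to host the rank increases, and that the chosen refinement is admissible, namely that its parts are positive and sum to the inflated rank $m^r+p$. Everything else, including the preservation of extremality at each stage and the disappearance of the $p$ rank-1 outcomes from the rank vector, follows immediately from the statements of Theorems \ref{theor:increaserank} and \ref{theor:refinerank} together with our convention of neglecting rank-1 outcomes.
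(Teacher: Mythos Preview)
Your proposal is correct and follows exactly the route the paper indicates: the paper offers no detailed proof but simply states that the result follows immediately by combining Theorems~\ref{theor:refinerank} and~\ref{theor:increaserank}, which is precisely what you have done. Your write-up merely spells out one natural way to chain these two results (inflate one outcome by $p$ via repeated applications of Theorem~\ref{theor:increaserank}, then refine it back via Theorem~\ref{theor:refinerank}), so there is nothing to add.
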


\section{Rank vectors in low dimensions}\label{sec:lowdim}

Using the results above and excluding the rank combinations prohibited by the conditions (i)-(iii) presented in Section \ref{sec:basics}, we may deduce most of the possible rank vectors of extreme POVMs in low dimensions. In each dimension, we omit the trivial observables associated with $(d)_d$. In dimensions 1-4, we are able to completely characterize the possible rank vectors of extreme POVMs but in dimensions 5-7 some open questions remain.
\begin{itemize}
\item[$d=2$] The only non-trivial extreme POVMs are the rank-1 POVMs associated with $(1,1)_2$, $(1,1,1)_2$, and $(1,1,1,1)_2$ whose existence is guaranteed by Theorem \ref{theor:rank-1lisaa}.
\item[$d=3$] Again, we have the POVMs associated with $(1_s)_3$, $s=1,\ldots,\,9$. The only remaining rank vectors associated with extreme POVMs are $(2,1_s)_3=:(2)_3$. Indeed, their existence follows from the existence of $(1_s)_2$ and Theorem \ref{theor:increaserank}.
\item[$d=4$] The existence of extreme POVMs with vectors $(1_s)_4$ is clear. We obtain $(3)_4$ from $(2)_3$ using Theorem \ref{theor:increaserank}. Since extreme POVMs with $(1_4)_2$ exist, Theorem \ref{theor:multiplyranks} implies that extreme POVMs with $(2_4)_4$ exist. From this one can show by refining ranks (and adding rank-1 outcomes) the existence of extreme POVMs with $(2_3)_4$, $(2_2)_4$, and $(2)_4$.
\item[$d=5$] Again, the cases $(1_s)_5$ are clear and $(4)_5$ follow from $(3)_4$ and Theorem \ref{theor:increaserank}. The cases $(3)_5$, $(3,2)_5$, $(3,2_2)_5$, and $(3,2_3)_5$ follow from $(2_s)_4$, $s=1,\,2,\,3,\,4$, and Theorem \ref{theor:increaserank}. Since the existence of POVMs with $(2_s)_4$, $s=1,\,2,\,3,\,4$, has already been shown, Theorem \ref{theor:lowtohigh} counts for the cases $(2_s)_5$, $s=1,\,2,\,3,\,4$. The remaining open case is $(3,2_4)_5$.
\item[$d=6$] All the cases $(m^1_{s^1},\ldots,m^R_{s^R})_d$ dealt with above with $d=1,\,2,\,3,\,4,\,5$ exist in the $d=6$ case (possibly adding rank-1 outcomes) by Theorem \ref{theor:lowtohigh}. The existence of POVMs with $(3_4)_6$ is guaranteed by Theorem \ref{theor:multiplyranks}, given the existence of extreme POVMs with $(1_4)_2$. From this, the cases $(3_3,2)_6$ $(3_2,2_2)_6$, $(3_2,2)_6$, $(3_3)_6$, $(3_2)_6$, and $(3)_6$ follow by refining or deleting ranks and possibly adding rank-1 outcomes. The cases $(2_9)_6$ follow from $(1_9)_3$ by doubling of ranks, and one obtains all combinations of 2 and 1 from this by refining or deleting. The extreme POVMs $(5)_6$ are obtained from $(4)_5$ by increasing the highest rank. The case $(4,2_5)_6$ follows from $(2,1_5)_3$ by doubling and one obtains all possible rank vectors of extreme POVMs involving 4 from this by refining or deleting. Moreover, we have $(3,2_5)_6$ and $(3,2_4)_6$. The remaining problematic case is $(3_2,2_3)_6$.
\item[$d=7$] Almost all the possible rank vectors corresponding to extreme POVMs in dimension 7 (those suggested by conditions (i)-(iii) of Section \ref{sec:basics}) are guaranteed by methods like the ones used above. The new problematic case now is $(3_2,2_6)_7$.
\end{itemize}

\section{Packing problems}\label{sec:packing}

In this section, we identify a rank vector $\vec m=(m^1_{s^1},\ldots,m^R_{s^R})_d$ with a set consisting of $s^r$ squares with sides of length $m^r$, $r=1,\ldots,\,R$. Moreover, we imagine a $d\times d$-box. We consider the following problem:

\begin{pack}[General packing problem]\label{pack:gen}
For any rank vector $\vec m=(m^1_{s^1},\ldots,m^R_{s^R})$ we consider the problem of packing the $s^r$ $m^r\times m^r$-boxes, $r=1,\ldots,\,R$, into $d\times d$-box: 
the individual boxes should not be broken down to pieces and the rank boxes should not overlap when fitted in the $d\times d$-box. We call this as the {\it general packing problem associated with $\vec m$}.
\end{pack}

\begin{center}
\begin{figure}
\includegraphics[scale=0.3]{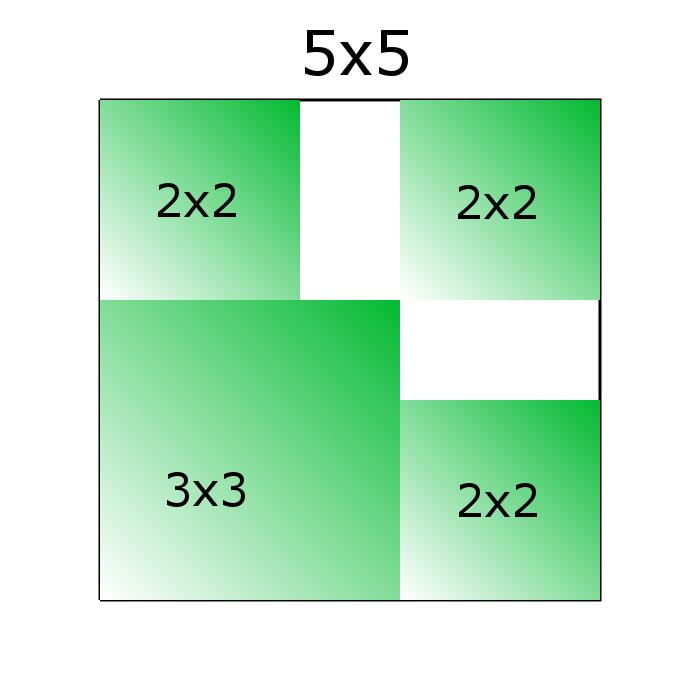}
\caption{\label{fig:5333} As we see, the general packing problem associated with $(3,2_3)_5$ can be solved; the $3\times3$-box and the three $2\times2$-boxes can be fitted inside the $5\times5$-box without overlaps or braking the boxes. After a moments thought, one can see that the packing problem associated with $(3,2_4)_5$ has no solution.}
\end{figure}
\end{center}

Note that if the general packing problem associated with the vector $\vec m$ can be solved, any POVM $\ms M$ with $\vec m(\ms M)=\vec m$ satisfies the conditions (i)-(iii) of Section \ref{sec:basics}. Indeed, the condition (ii) corresponds to the requirement that the area covered by the rank boxes has to fit inside the $d\times d$-box, and if (iii) does not hold, there will be two rank boxes that do not fit inside the big box even if they are set side by side. However, the vector $(3,2_4)_5$ satisfies the conditions (i)-(iii) but, as pointed out in Figure \ref{fig:5333}, the associated general packing problem has no solution. However, numerical calculations show that an extreme POVM associated with this rank vector exists. Thus, solutions to packing problems do not exhaust the set of possible rank vectors of extreme POVMs. However, we will see that a large class of rank vectors of extreme POVMs can be associated to solutions of particular packing problems.

\begin{center}
\begin{figure}
\includegraphics[scale=0.3]{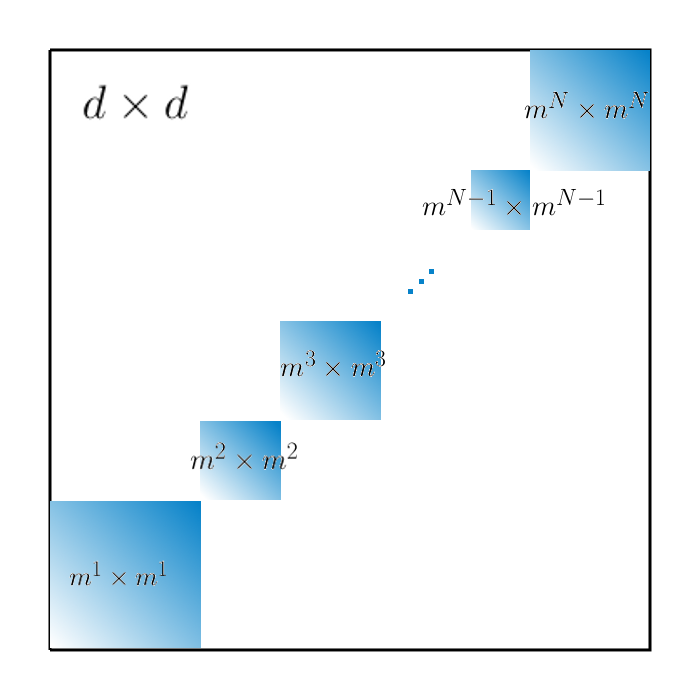}
\caption{\label{fig:pvm} The rank boxes corresponding to the rank vector of a PVM can always be packed inside the $d\times d$-box. In fact, they all fit on the diagonal.}
\end{figure}
\end{center}

In the sequel, we call formation (packing of the rank boxes inside the big box with no overlaps or braking down the boxes) {\it symmetric} when the formation is symmetric with respect to a diagonal through the large $d\times d$-box. We may also formulate the following packing problem:

\begin{pack}[Symmetric packing problem]\label{pack:sym}
If we are able to solve the general packing problem associated with $\vec m$ in a way where the rank boxes can be organized in a formation that can be obtained from a symmetric formation by deleting some of the boxes of the symmetric formation, we say that the {\it symmetric packing problem associated with $\vec m$} has a solution.
\end{pack}

In Figure \ref{fig:5333}, a solution for the general packing problem associated with $(3,2_3)_5$ is given. This is obviously also a solution for the symmetric packing problem. The solution presented in the figure has, in fact a symmetric formation of boxes. Erasing, e.g.,\ the top-left $2\times 2$-box shows that also the packing problem associated with $(3,2_2)_5$ has a solution. The symmetric packing problem associated with $(3_2,2_3)_6$, on the other hand, has no solution, as remarked in Figure \ref{fig:33222}, although the associated general packing problem can be solved. As pointed out in Figure \ref{fig:pvm}, the rank vector of a PVM always has a solution for the general packing problem. Because the rank boxes of a PVM always fit on the diagonal of the $d\times d$-box, the corresponding symmetric packing problem has a solution as well.

\begin{center}
\begin{figure}
\includegraphics[scale=0.25]{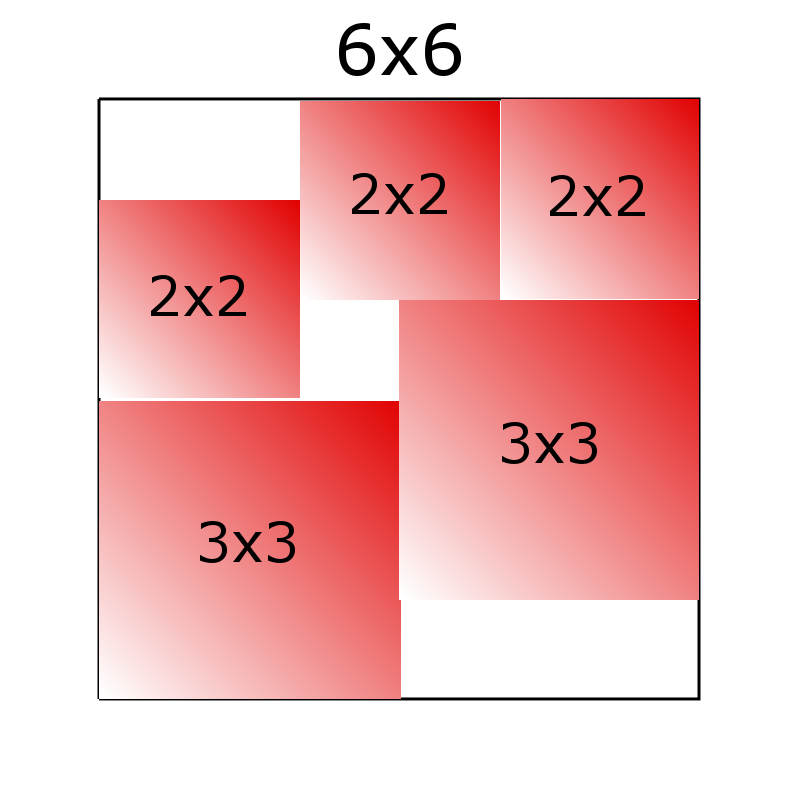}
\caption{\label{fig:33222} Above, one solution for the general packing problem associated with $(3_2,2_3)_6$ is given. However, after a little thinking one finds that the corresponding symmetric packing problem has no solution.}
\end{figure}
\end{center}

In the following theorem, we show that whenever the symmetric packing problem can be solved, we can construct an extreme POVM in the associated dimension with the associated ranks. 
The proof hence gives a recipe for constructing a wide class of extreme POVMs.

\begin{theorem}\label{theor:symext}
Suppose that the symmetric packing problem associated with the rank vector $\vec m$ has a solution. There is an extreme POVM $\ms M$ with $\vec m(\ms M)=\vec m$.
\end{theorem}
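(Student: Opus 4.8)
The plan is to read a family of subspaces of $\hil$ off the symmetric formation, turn them into positive operators, and verify extremality through characterization (C); the ``obtained by deleting boxes'' clause will be handled at the very end by Theorem \ref{theor:deleterank}. Concretely, I would fix a symmetric formation $F$ from which the given solution arises by deleting boxes, and identify the basis $\{\ket n\}_{n=1}^d$ with the diagonal cells of the $d\times d$ grid, so that the cell in row $m$ and column $n$ stands for the matrix unit $\ketbra mn$. To each box of $F$ attach one operator of the matching rank. A diagonal box occupying cells $A\times A$ for an index set $A$ is assigned the operator whose spectral vectors are $\{\ket a\}_{a\in A}$, so that its $|A|^2$ pieces $\ketbra{a}{a'}$ are exactly the matrix units filling that diagonal block. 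By symmetry the remaining boxes come in mirror pairs: a pair occupying $A\times B$ and $B\times A$, with $A,B$ disjoint of size $m$, is assigned two rank-$m$ operators, each supported on $\mathrm{span}\{\ket a,\ket b: a\in A,\ b\in B\}$, with spectral vectors chosen generically in this $2m$-dimensional space. Thus every box yields exactly one operator and the ranks match $\vec m$.

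The central claim is that the operators $\ketbra{f_{jk}}{f_{jl}}$ so produced are linearly independent, which by characterization (C) is extremality. I would prove this by projecting any vanishing linear combination onto the off-diagonal cells, i.e.\ onto $\bigoplus_{\mathrm{pairs}}\big(A\times B\oplus B\times A\big)$. The basis-aligned diagonal operators live entirely in diagonal blocks and are annihilated by this projection, as are the diagonal ``shadows'' $A\times A$ and $B\times B$ of the off-diagonal operators; since the boxes of $F$ do not overlap, the images coming from different mirror pairs land in disjoint off-diagonal cell sets (and disjoint from every diagonal block) and cannot interfere. What survives is, pair by pair, the contribution of the two rank-$m$ operators to the $2m^2$-dimensional space $A\times B\oplus B\times A$. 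The key lemma is that for a generic choice the $m^2$ off-diagonal images of one operator and the $m^2$ of the other are complementary and span this whole space: writing each subspace as the graph of a linear map $L_i\colon\mathrm{span}(A)\to\mathrm{span}(B)$, the two families of images become the graphs of the conjugations $M\mapsto L_iM(L_i^{-1})^{*}$, and these graphs meet only in $0$ for generic $L_1,L_2$, i.e.\ off a proper algebraic subvariety. Granting the lemma, the projection forces all coefficients of the off-diagonal operators to vanish, after which the leftover equation involves only the matrix units of the diagonal operators and collapses to the trivial relation.

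Once independence is in hand I would pass from the raw positive operators $\ms N_j$ (any positive operator with the prescribed range) to a genuine POVM by setting $\ms M_j=T^{-1/2}\ms N_jT^{-1/2}$ with $T=\sum_j\ms N_j$; if the ranges fail to span $\hil$ I first append rank-one pieces from the orthogonal complement, which keeps the family independent exactly as in the proof of Theorem \ref{theor:rank-1lisaa} and makes $T$ invertible. Conjugation by the invertible $T^{-1/2}$ is a bijection on $\mc L(\hil)$ and hence preserves the independence in characterization (C), so $\ms M=(\ms M_j)_j$ is an extreme POVM realizing the full formation $F$. Finally, deleting from $F$ the boxes absent from the given solution amounts to deleting the corresponding outcomes, and Theorem \ref{theor:deleterank} guarantees that the resulting smaller POVM is still extreme and has rank vector $\vec m$. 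The main obstacle is the key lemma of the second paragraph: controlling, uniformly across all mirror pairs, the overlap of the off-diagonal operators' diagonal shadows with the diagonal blocks, and proving that the two operators attached to each pair can be chosen to fill the off-diagonal cell space. The projection-onto-off-diagonal-cells device is precisely what defuses the shadow overlaps and reduces the whole question to this single genericity statement about pairs.
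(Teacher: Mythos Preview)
Your plan is correct and its architecture coincides with the paper's: reduce to a perfectly symmetric formation, attach to each box spectral vectors supported on the span of the basis vectors indexed by its row and column sets, prove the linear independence required by characterization (C) by projecting onto the off-diagonal cell region (which kills the diagonal boxes and the diagonal ``shadows'' and decouples the mirror pairs from one another), then normalize by conjugating with $T^{-1/2}$ and finally invoke Theorem~\ref{theor:deleterank} for the deletion clause. The one genuine difference is how a single mirror pair is handled. You argue existentially: writing each of the two $m$-dimensional subspaces as the graph of $L_i:\mathrm{span}(A)\to\mathrm{span}(B)$, the off-diagonal images become the graphs of $N\mapsto L_iN(L_i^{*})^{-1}$, and these are complementary exactly when the Sylvester equation $CM=MC^{*}$ (with $C=L_1^{-1}L_2$) has only the trivial solution, i.e.\ when $\mathrm{spec}(C)\cap\overline{\mathrm{spec}(C)}=\emptyset$, a generic condition. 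The paper instead writes down explicit vectors, taking for the pair with row set $\{r,\ldots,r{+}m{-}1\}$ and column set $\{s,\ldots,s{+}m{-}1\}$ the choices $h^+_k=\ket{r{+}k{-}1}+\ket{s{+}k{-}1}$ and $h^-_k=\ket{s{+}k{-}1}-i\ket{r{+}k{-}1}$ (in your notation $L_1=\id$, $L_2=i\,\id$, so $C=i\,\id$ and the Sylvester condition is immediate), and verifies the span by a short direct computation. Your route is slightly more conceptual; the paper's buys a fully explicit extreme POVM rather than a generically existing one. If you keep the generic version, state the Sylvester criterion explicitly instead of leaving it as an unjustified ``off a proper algebraic subvariety''.
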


\begin{proof}
Fix a finite dimension $d$. Suppose that we are able to solve the symmetric packing problem associated with $\vec m=(m^1_{s^1},\ldots,m^R_{s^R})_d$. Using theorems \ref{theor:deleterank} and \ref{theor:refinerank}, we can show that, if there is an extreme POVM with the rank vector associated with an underlying symmetric solution from which the near-symmetric solution of $\vec m$ can be obtained by possibly deleting rank boxes, there is also an extreme POVM with the rank vector $\vec m$. Hence, we may assume that the rank boxes of $\vec m$ can be arranged into a perfectly symmetric formation.

Fix a $d$-dimensional Hilbert space and an orthonormal basis $\{\ket n\}_{n=0}^{d-1}$ for $\hil$. Let us visualize a $d\times d$ box with $d^2$ slots with the coordinates $(r,s)$, $r,\,s=1,\ldots,\,d$. To each slot $(r,s)$, we associate a vector $g_{rs}\in\hil$,
$$
g_{rs}=\left\{\begin{array}{ll}
\ket r+\ket s,&r>s,\\
\ket r,&r=s,\\
\ket r-i\ket s,&r<s.
\end{array}\right.
$$
According to the note made in the beginning of this proof, we may arrange the rank boxes associated with $\vec m$ into a formation inside this $d\times d$-grid which is symmetric with respect to the diagonal $\{(r,r)\,|\,r=0,\ldots,\,d-1\}$. We assume that the box associated with the rank $m^j$ of $\vec m$ occupies the slots $B^j:=\{(r^j+k-1,s^j+l-1)\,|\,k,\,l=1,\ldots,\,m^j\}$. Because of our assumption, each box $B^j$, $j=1,\ldots,\,N$, $N:=s^1+\cdots+s^R$, is either on the diagonal or completely contained in the upper triangle above the diagonal or in the lower triangle below the diagonal and each box that is not on the diagonal has a pair on the opposite triangle occupying slots with transposed coordinates. From each box $B^j$, we pick the vectors $h_{jk}:=g_{r^j+k-1,s^j+k-1}$, $k=1,\ldots,\,m^j$, associated with the diagonal of the box.

Let us define the operator $R=\Big(\sum_j\sum_{k=1}^{m^j}\ketbra{h_{jk}}{h_{jk}}\Big)^{-1/2}$; if the operator in the parentheses is not of full rank, we may add rank-1 boxes in a symmetric fashion as long as there are empty slots inside the $d\times d$-box. We may now set up a POVM $\ms M=(\ms M_j)_{j=1}^N$,
$$
\ms M_j=R\sum_{k=1}^{m^j}\ketbra{h_{jk}}{h_{jk}}R,\qquad j=1,\ldots,\,N.
$$
Since $R$ is invertible, $\ms M$ is extreme if and only if the set $\{\ketbra{h_{jk}}{h_{jl}}\,|\,k,\,l=1,\ldots,\,m^j,\ j=1,\ldots,\,N\}$ is linearly independent which is what we are going to show next.

\begin{center}
\begin{figure}
\includegraphics[scale=0.3]{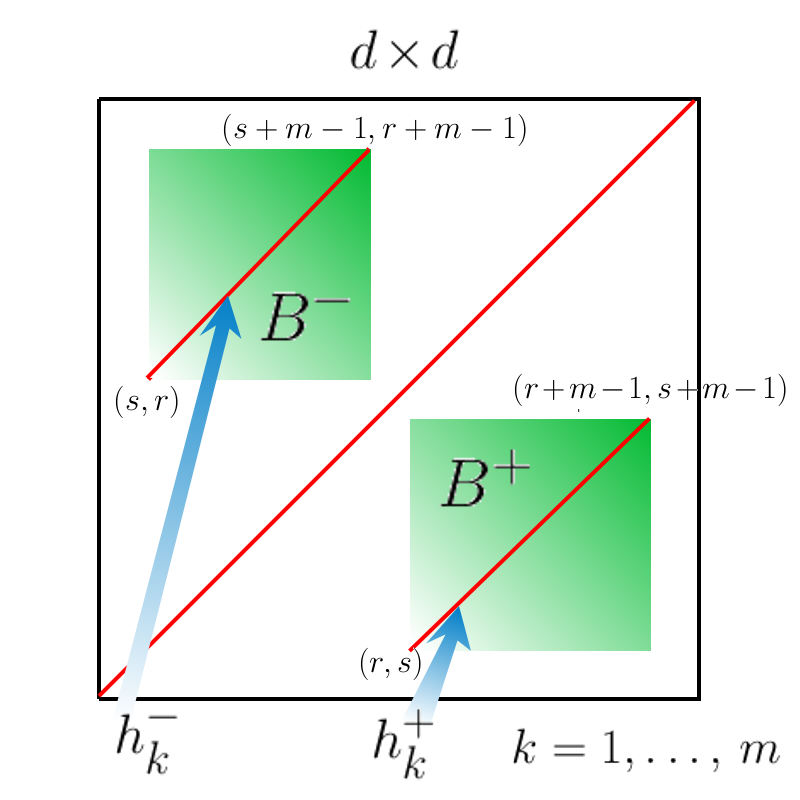}
\caption{\label{fig:todistukseen} }
\end{figure}
\end{center}

It follows easily that, when the box $B^j$ is on the diagonal, the set $\{\ketbra{h_{jk}}{h_{jl}}\,|\,k,\,l=1,\ldots,\,m^j\}$ spans the subalgebra generated by $\ketbra{r^j+k-1}{r^j+l-1}$, $k,\,l=1,\ldots,\,m^j$. Thus, for the boxes $B^j$ on the diagonal, the rank-1 operators $\ketbra{r^j+k-1}{r^j+l-1}$, $k,\,l=1,\ldots,\,m^j$, generate an algebra of block-diagonal matrices in the basis $\{\ket n\}_{n=0}^{d-1}$ whose dimension is the sum of the areas of these boxes. We may hence concentrate on the boxes in the upper and lower triangles.

Let us pick a box $B^+=\{(r+k-1,s+l-1)\,|\,k,\,l=1,\ldots,\,m\}$, $r>s$, in the lower triangle with its twin box $B^-=\{(s+k-1,r+l-1)\,|\,k,\,l=1,\ldots,\,m\}$ in the upper triangle; see Figure \ref{fig:todistukseen}. As always, pick the vectors $h^+_k=g_{r+k-1,s+k-1}$, $h^-_k=g_{s+k-1,r+k-1}$, $k=1,\ldots,\,m$, from the diagonals of these boxes. One has
\begin{eqnarray*}
\ketbra{h^+_k}{h^+_l}&=&\ketbra{r+k-1}{r+l-1}+\ketbra{r+k-1}{s+l-1}+\ketbra{s+k-1}{r+l-1}\\
&+&\ketbra{s+k-1}{s+l-1},\\
\ketbra{h^-_k}{h^-_l}&=&\ketbra{r+k-1}{r+l-1}+i\ketbra{r+k-1}{s+l-1}-i\ketbra{s+k-1}{r+l-1}\\
&+&\ketbra{s+k-1}{s+l-1},\\
k,\,l&=&1,\ldots,\,m.
\end{eqnarray*}
We find that this is the only twin box pair able to produce linear combinations of $\ketbra{h_{jk}}{h_{jl}}$ with contributions from the lower-right block generated by $\ketbra{r+k-1}{s+l-1}$, $k,\,l=1,\ldots,\,m$, and from the upper-right block generated by $\ketbra{s+k-1}{r+l-1}$, $k,\,l=1,\ldots,\,m$. Moreover, we are able to produce linear combinations where only one of these rank-1 operators from the lower-right or upper-left block are present, e.g.,\ $\ketbra{h^+_k}{h^+_l}+i\ketbra{h^-_k}{h^-_l}$ for the lower-right and $\ketbra{h^+_k}{h^+_l}-i\ketbra{h^-_k}{h^-_l}$ for the upper-left block. Thus the linear span of
$$
\{\ketbra{h^z_k}{h^z_l}\,|\,z=\pm,\ k,\,l=1,\ldots,\,m\}
$$
is linearly independent of all the linear spans of corresponding rank-1 operators associated to other box twins and boxes on the diagonal. Moreover the dimension of this linear span is the area $2m^2$ of the two boxes. Treating all the box twins in the same way, we have proven the claim.
\end{proof}

\section{Conclusions}

Methods of creating new extreme POVMs from previously known ones have been established and the consequences of these methods for finding possible rank combinations of extreme POVMs have been discussed. In particular, a `geometric' method of establishing extreme POVMs with rank combinations solving a certain packing problem has been introduced. The method appearing in the proof of Theorem \ref{theor:symext} can be used to define a wide variety of novel extreme POVMs.

Numerical evidence provided by Dr. Navascu\'es however reveals that this geometric method does not cater for all the possible rank combinations of extreme POVMs. In particular, the rank vector $(3,2_4)_5$ solves neither the symmetric nor the general packing problem and yet, according to numerics, an extreme POVM with this combination of ranks exists. This means that the necessary and sufficient rules a rank vector has to satisfy for the existence of an extreme POVM with that particular combination of ranks are more subtle than what found here. However, the new class of extreme POVMs that can be established with the methodology presented in this work greatly widens the set of extreme POVMs. Especially the set of PVMs contributes only to a small fraction of the variety in the set of extreme POVMs found here.

\section*{Acknowledgements}

We thank Drs. Teiko Heinosaari and Miguel Navascu\'es for their feedback and comments on the manuscript. Especially Dr. Navascu\'es is recognized for turning the authors' interest towards the question of possible rank combinations of extreme POVMs and for the numerical evidence he has provided.


\begin{thebibliography}{0}

\bibitem{arveson}
W.\ Arveson,
Subalgebras of {$C^{\ast} $}-algebras,
{\it Acta Math.}\ {\bf 123}, 141-224 (1969)

\bibitem{kirja}
P.\ Busch, P.\ Lahti,\ J.-P.\ Pellonp\"a\"a, and K.\ Ylinen, {\it ``Quantum Measurement''} (Springer 2016)

\bibitem{dlp}
G.M.\ D'Ariano, P.\ Lo Presti, and P.\ Perinotti,
Classical randomness in quantum measurements,
{\it J.\ Phys.\ A: Math.\ Gen.}\ {\bf 38}, 5979 (2005)

\bibitem{HaHePe12}
E.\ Haapasalo, T.\ Heinosaari, and J.-P.\ Pellonp\"a\"a,
Quantum measurements on finite dimensional systems: relabeling and mixing,
{\it Quantum Inf.\ Process.}\ {\bf 11}, 1751-1763 (2012)

\bibitem{HePe09}
T.\ Heinosaari and J.-P.\ Pellonp\"{a}\"{a},
Canonical phase measurement is pure,
{\it Phys.\ Rev.\ A} {\bf 80}, 040101(R) (2009)

\bibitem{HePe11}
T.\ Heinosaari and J.-P.\ Pellonp\"a\"a,
Extreme commutative quantum observables are sharp,
{\it J.\ Phys.\ A: Math.\ Theor.} {\bf 44}, 315303 (2011)

\bibitem{holevokirja}
A.S.\ Holevo,
{\it Probabilistic and Statistical Aspects of Quantum Theory}
(North-Holland Publishing Company, Amsterdam-New York-Oxford, 1982)

\bibitem{parthasarathy99}
K.R.\ Parthasarathy,
Extremal decision rules in quantum hypothesis testing,
{\it Inf.\ Dim.\ Anal.}\ {\bf 2}, 557-568 (1999)

\bibitem{Pellonpaa11}
J.-P.\ Pellonp\"{a}\"{a}, Complete characterization of extreme quantum observables in infinite dimensions,
{\it J.\ Phys.\ A: Math.\ Theor.}\ {\bf 44}, 085304 (2011)

\end{thebibliography}
\end{document}